\newtheorem{thm}{Theorem}[]
\definecolor{forestgreen}{rgb}{0.13, 0.55, 0.13}
\begin{document}
\title{General Method for Classicality Certification in the Prepare and Measure Scenario}

\author{Carlos de Gois}
\affiliation{Instituto  de  Física ``Gleb  Wataghin'', Universidade  Estadual  de  Campinas, CEP 13083-859, Campinas, Brazil}
\author{George Moreno}
\affiliation{International Institute of Physics, Federal University of Rio Grande do Norte, 59070-405 Natal, Brazil}
\author{Ranieri Nery}
\affiliation{International Institute of Physics, Federal University of Rio Grande do Norte, 59070-405 Natal, Brazil}
\author{Samuraí Brito}
\affiliation{International Institute of Physics, Federal University of Rio Grande do Norte, 59070-405 Natal, Brazil}
\author{Rafael Chaves}
\affiliation{International Institute of Physics, Federal University of Rio Grande do Norte, 59070-405 Natal, Brazil}
\affiliation{School of Science and Technology, Federal University of Rio Grande do Norte, 59078-970 Natal, Brazil}
\author{Rafael Rabelo}
\affiliation{Instituto  de  Física ``Gleb  Wataghin'', Universidade  Estadual  de  Campinas, CEP 13083-859, Campinas, Brazil}

\begin{abstract}
   Preparation and measurement of physical systems are the operational building blocks of any physical experiment, and to describe them is the first purpose of any physical theory. It is remarkable that, in some situations, even when only preparation and measurement devices of a single system are present and they are uncharacterized, it is possible to distinguish between the behaviours of quantum and classical systems relying only on observational data. Certifying the physical origin of measurement statistics in the prepare and measure scenario is of primal importance for developing quantum networks, distributing quantum keys and certifying randomness, to mention a few applications, but, surprisingly, no general methods to do so are known. We progress on this problem by crafting a general, sufficient condition to certify that a given set of preparations can only generate classical statistics, for any number of generalized measurements. As an application, we employ the method to demonstrate non-classicality activation in the prepare and measure scenario, also considering its application in random access codes. Following that, we adapt our method to certify, again through a sufficient condition, whether a given set of measurements can never give rise to non-classical behaviors, irrespective of what preparations they may act upon. This, in turn, allows us to find a large set of incompatible measurements that cannot be used to demonstrate non-classicality, thus showing incompatibility is not sufficient for non-classicality in the prepare and measure scenario.
\end{abstract}

\maketitle

\section{Introduction}

    Quantum theory, albeit ubiquitous and extensively tested, still presents us with interesting and unintuitive phenomena even for the simplest physical systems. The paradigmatic examples are Bell nonlocality \cite{bell-epr-1964, brunner-nonlocality-2014} and Einstein-Podolsky-Rosen (EPR) steering \cite{wiseman-steering-2007, cavalcanti-steering-2016, uola-steering-2020} --- manifest as strong correlations between space-like separated experiments performed by independent observers --- that can unambiguously discern between classical and quantum predictions. More recently, a similar division was found to arise in a setup closely related to quantum communication tasks, the so-called prepare and measure (PAM) scenario \cite{gallego-pam-2010}. 
    
    Descriptions of the preparation of physical systems and of their measurements are the building blocks of any physical theory.
    Hence, it is notable that even in a semi-device independent approach --- relying alone on observational data and mild assumptions about the state preparation --- such scenario is already enough to distinguish between quantum and classical behaviors. Apart from its foundational relevance, certifying that the systems employed are indeed quantum and behave as expected is an essential task in applications of the PAM scenario, ranging from communication in quantum networks \cite{bowles2015testing,wang2019characterising} and self-testing \cite{tavakoli2018self,miklin2020universal} to quantum key distribution \cite{pawlowski-qkd-2011}, randomness certification \cite{passaro-randomness-2015} and random access codes \cite{li2012semi}, also figuring at the core of informational principles for quantum theory \cite{pawlowski2009information,chaves2015information} and the modelling of paradigmatic gedanken experiments such as the so-called delayed choice experiment \cite{chaves2018causal}.
    
    It has been long known that Holevo's bound \cite{holevo-bound-1973} limits the amount of information that may be retrieved in such a scenario, implying that quantum messages cannot transmit more information than their classical counterparts. Notwithstanding, the non-classicality of the measurement outcomes can still be witnessed if one imposes certain assumptions on the preparation device \cite{gallego-pam-2010,bowles2014certifying,chaves2015device,van2017semi,tavakoli2020informationally,Poderini2020}. For instance, even though a qubit can transmit at most a bit of information, it can still generate measurement statistics that cannot be reproduced by a classical bit \cite{gallego-pam-2010,bowles2014certifying}. On the more applied side, such dimension assumptions can also be used to derive device-independent witnesses for the Hilbert space dimension of the prepared states \cite{brunner2013dimension,George2020}. 

    Within this context, a task of primal importance is to be able to decide whether a given set of prepared states can or cannot give rise to non-classical behaviours. In a Bell scenario, for instance, it has been long known that entanglement is a necessary but insufficient ingredient for non-locality, as there are entangled states that cannot violate any Bell inequality \cite{werner-quantum-1989, barrett-nonsequential-2002}, giving rise to general methods for deciding whether an entangled state is local \cite{cavalcanti-method-2016,hirsch-method-2016}. More than its fundamental relevance, such criteria can also be put to use to show hidden non-locality or activation of non-locality, the former related to the fact that local states can have their non-locality activated if one performs local filtering prior to measurements \cite{popescu-filtering-1995, hirsch-hidden-2013,gallego2014nonlocality} and the latter related to measurements on a number of copies of such states \cite{navascues-activation-2011, palazuelos-superactivation-2012,cavalcanti2011quantum}. In spite of the PAM relevance, it was only recently that the first test, though of limited applicability, has been proposed in order to detect the non-classicality of quantum state preparations \cite{Poderini2020}. Strikingly, however, as opposed to the Bell case, no general methods are known to provide such certification. That is precisely the problem we solve in this paper.
    
    Inspired by results crafted for constructing local hidden variable models for entangled states \cite{cavalcanti-method-2016, hirsch-method-2016}, we propose a general method allowing to certify whether the behaviors arising from a preparation set are always classically reproducible, irrespective of what generalized measurements are applied on them. Taking the measurements as the resource, we conversely derive a method to test whether a measurement set can be classically reproduced no matter what preparations they act upon. We demonstrate the applicability of both methods in a number of cases, and in particular employ them to show activation of non-classicality of a given set of states, and also prove the insufficiency of measurement incompatibility for observing non-classical behaviors in the prepare and measure scenario.

%%%%%%%%%%%%%%%%%%%%%%%%%%%%%%%%%%%%%%%%%%%%%%%%%%%%%%%%%%%%%%%%%%%%

\section{Prepare and measure scenario}

    In the prepare and measure scenario, a preparation device $P$ takes a random variable $x \in \mathcal{X}$ as input and, subjected to it, prepares a physical system in a specific physical state. A second device, $M$, receives the prepared system and, given an independently chosen random variable $y \in \mathcal{Y}$ as the choice of an observable, measures the system returning the output $b \in \mathcal{B}$ (see Fig. \ref{fig:pm-box}). Without further information on the inner workings of these devices, the best possible description is given by the probabilities $p(b,x,y)$ of the observed events. Without loss of generality, and given that the state preparation $x$ and measurement choices $y$ are under control of the experimenters, one typically considers the conditional distribution $p(b\vert x,y)$ represented by the set of behaviors  $\mathbf{p} = \{ p(b \vert x, y) \}_{b, x, y}$
    
    Classically, the state being prepared by the device $P$ is a random variable $a \in \mathcal{A}$, that can be understood as a message being sent to the $M$ device. Assuming that the preparation and measurement devices might have some pre-shared correlations, described by the random variable $\Lambda$ governed by a probability distribution $\pi(\lambda)$, the most general classical description of this experiment is given by
    \begin{equation}
        p(b \vert x, y) = \sum_a \int_\Lambda \pi(\lambda) p(a \vert x, \lambda) p(b \vert a, y, \lambda).
        \label{eq:classical-model}
    \end{equation}
    To obtain the decomposition above, a set of causal assumptions is considered. First the independence of the pre-shared correlations from the state preparation and measurement choices, that is, $p(x,y,\lambda)=p(x,y)p(\lambda)$; an assumption that in the context of Bell scenarios is referred as measurement indepedence or ``free-will'' \cite{hall2010local,wood2015lesson,chaves2015unifying}. The second causal assumption follows from the fact that, differently from a Bell scenario, in the prepare and measure case we deal with temporal correlations. More precisely, the measurement device lies in the future light cone of the preparation device. Thus, the message $a$ should only be causally dependent on the input $x$ and the shared correlations $\lambda$. The final causal assumption is given by the fact that even though the outcome $b$ can depend on the input $x$, such correlations are mediated by the state being prepared, that is, the correlations between $b$ and $x$ are screened-off once we condition on the values of $a$ and $\lambda$. Such causal assumptions can be faithfully and graphically represented using the intuitive directed acyclic graph (DAG) shown in Fig. \ref{fig:pm-dag}. 
    \begin{figure}
        \centering
        \subfigure[Box representation]{\label{fig:pm-box}\includegraphics[width=.42\columnwidth]{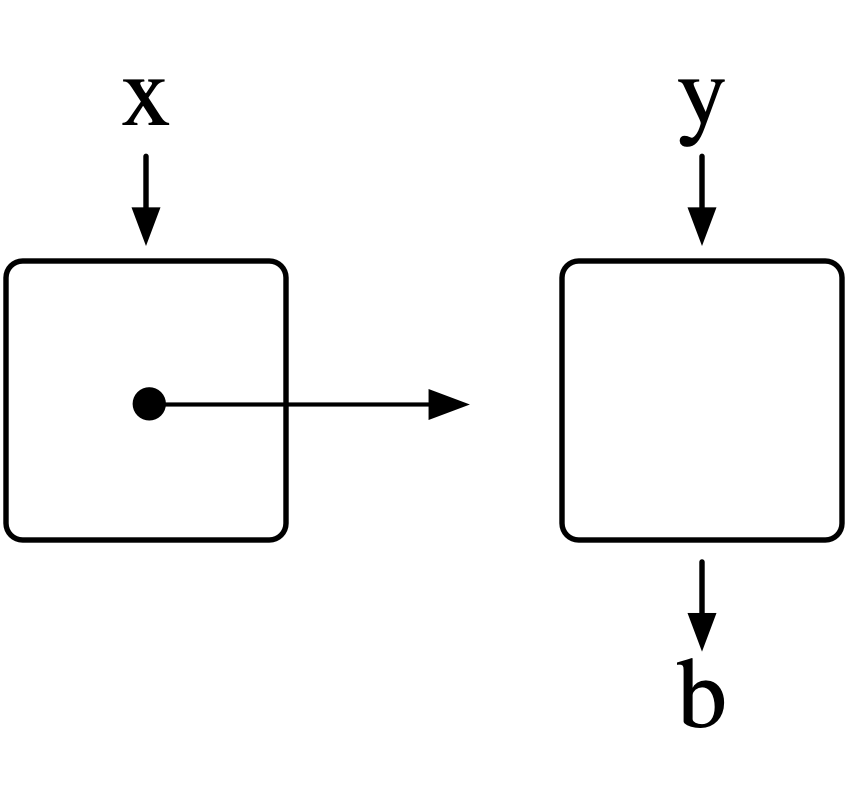}}\hspace{3.5em}
        \subfigure[DAG representation]{\label{fig:pm-dag}\includegraphics[width=.25\columnwidth]{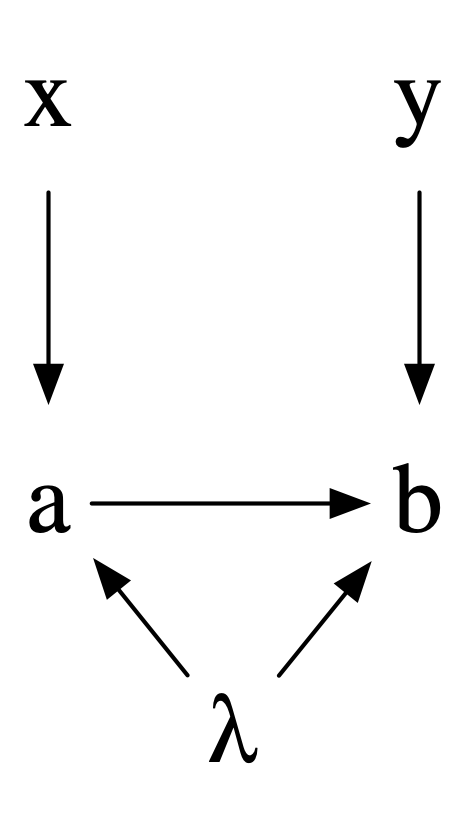}}
        \caption{Operational and causal representations of a prepare and measure scenario. On the left, the black-box shows a preparation, conditioned on input $x$, being sent to a measurement device with measurement choice $y$ and output $b$. On the right, the DAG turns explicit the unobservable variables $a$ and $\lambda$ in the causal structure.}
    \end{figure}

    In a quantum description, states from the set $\mathcal{S} = \{\rho_x \}_x$ are prepared and the possible observables are quantum measurements $\mathcal{M}_y$, where each $\mathcal{M}_y = \{ M_{b \vert y} \}_{b}$ is a collection of Positive Operator-Valued Measure (POVM) operators. The quantum experiment is then defined by $\mathcal{E} = \{ \mathcal{S}, \mathcal{M} \} $, where $ \mathcal{M} = \{ \mathcal{M}_y \}_y $. Employing Born's rule, any element $p(b \vert x, y)$ in the behavior $\bm{p}$ of experiment $\mathcal{E}$ is then given by
    \begin{equation}
        p(b \vert x, y) = \text{tr}\left( M_{b \mid y} \rho_x \right) .
    \end{equation}
    Notice that if $\abs{\mathcal{A}} \geq \abs{\mathcal{X}}$, a message $a$, even if classical, may encode the choice $x$ of preparation, and any behavior $\bm{p}$ is then reproducible with no more than classical communication. Only when some restriction is imposed on such communication, is that differences between the classical and quantum predictions can emerge \cite{gallego-pam-2010,bowles2014certifying,chaves2015device,van2017semi,tavakoli2020informationally,Poderini2020}. Typically, the bound $\abs{\mathcal{A}} < \abs{\mathcal{X}}$ is imposed on the dimension of the classical message. For instance, by preparing qubit states we can generate correlations not reproducible by classical bit messages \cite{gallego-pam-2010}. Whenever each and every $p(b \vert x, y)$ can be written as the classical model \eqref{eq:classical-model}, the quantum experiment $\mathcal{E}$ is classically reproducible, meaning the measurements statistics can be obtained by transmitting classical messages $a$ of the same dimension as the prepared quantum states in the set $\mathcal{S}$.

%%%%%%%%%%%%%%%%%%%%%%%%%%%%%%%%%%%%%%%%%%%%%%%%%%%%%%%%%%%%%%%%%%%%

\section{Witnessing the classicality of preparations} 

    The first question one might ask in the PAM scenario is whether a quantum experiment $\mathcal{E}$, characterized by prepared states and measurements, can be classically simulated. Similarly to what happens in a Bell scenario, if the cardinalities of the sets of the variables $a$, $b$, $x$ and $y$ are fixed, the set of behaviors compatible with the classical description \eqref{eq:classical-model} is a polytope that can be characterized in terms of a finite number of linear (Bell-like) inequalities. If one has a complete description of such inequalities for a given scenario, then checking the compatibility of the quantum experiment data with a classical description can be easily certified if no inequalities are violated. The problem of this approach is twofold. First, determining all inequalities becomes intractable as we increase the cardinality of the variables. The second, more fundamental issue, stems from the fact that states that can only give rise to classical correlations if a given number of measurements are performed, can indeed have their non-classicality revealed if we increase the number of possible measurements \cite{Poderini2020}. In the following Theorem, we address this problem by deriving a sufficient condition to certify the classicality of a given set of prepared states valid for any number of projective measurements (PMs).
    % %
    \begin{thm}[Preparation classicality for all PMs]
        Let $\mathcal{S} = \{ \rho_x \}_{x=1}^{\abs{\mathcal{X}}}$ be a set of $d$-dimensional quantum preparations. All behaviors of $\mathcal{S}$ are classically reproducible for all projective measurements if there exist a finite collection $\mathcal{M}$ of projective, rank-1 measurements on $\mathbb{C}^d$ such that each $\{ O_x \}_{x=1}^{\abs{\mathcal{X}}}$ implicitly defined by
        \[\rho_x = \eta O_x + (1 - \eta) \frac{\bm{1}_d}{d} ,\]
        where $\eta$ is the radius of the largest sphere that can be inscribed into the convex hull of $\mathcal{M}$, admits a classical model for the measurements in $\mathcal{M}$.
        
        \label{thm:projective}
    \end{thm}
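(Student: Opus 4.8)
The plan is to establish, for every projective measurement $\{N_{b\mid y}\}$, a classical model of the form \eqref{eq:classical-model} for the behavior $\text{tr}(N_{b\mid y}\rho_x)$, by transporting the classical model assumed for the shrunk preparations $\{O_x\}$. The first step is a reduction to rank-$1$ projective measurements: any projective measurement is a coarse-graining of a rank-$1$ projective measurement on $\mathbb{C}^d$, and coarse-graining the outcome is a classical post-processing carried out at the measurement device, which preserves membership in \eqref{eq:classical-model}. So fix a rank-$1$ projective measurement $\mathcal{N}_y=\{N_{b\mid y}\}_{b=1}^{d}$.

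The second step is the ``damping identity''. Since $\text{tr}(O_x)=1$ and $\text{tr}(N_{b\mid y})=1$, a one-line computation gives $\text{tr}(N_{b\mid y}\rho_x)=\text{tr}\bigl(\Gamma_\eta(N_{b\mid y})\,O_x\bigr)$ with $\Gamma_\eta(N_{b\mid y}):=\eta\,N_{b\mid y}+(1-\eta)\bm{1}_d/d$, and $\Gamma_\eta(\mathcal{N}_y):=\{\Gamma_\eta(N_{b\mid y})\}_b$ is a valid POVM (a ``noisy'' version of $\mathcal{N}_y$). Hence the behavior of $\mathcal{S}$ under $\mathcal{N}_y$ equals the behavior of $\{O_x\}$ under $\Gamma_\eta(\mathcal{N}_y)$.

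The third and central step is the geometric claim that $\Gamma_\eta(\mathcal{N}_y)$ is classically simulable by $\mathcal{M}$, i.e.\ that it decomposes as $\Gamma_\eta(N_{b\mid y})=\sum_{z}\mu_y(z)\sum_{b'}r_{y,z}(b\mid b')\,\mathcal{N}_{b'\mid z}$ for a probability distribution $\mu_y$ over $\mathcal{M}$ and stochastic relabelings $r_{y,z}$. This is where $\eta$ enters. Working in the affine space of trace-one Hermitian operators with the Hilbert--Schmidt metric, every rank-$1$ projector lies at the fixed distance $\sqrt{1-1/d}$ from $\bm{1}_d/d$, which is itself the uniform average of the $d$ effects of any rank-$1$ projective measurement and therefore lies in the convex hull of $\mathcal{M}$. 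Consequently the effects of $\Gamma_\eta(\mathcal{N}_y)$, obtained by contracting those of $\mathcal{N}_y$ toward $\bm{1}_d/d$ by the factor $\eta$, lie at distance $\eta\sqrt{1-1/d}<\eta$ from $\bm{1}_d/d$; since $\eta$ is the radius of the largest ball inscribed in the convex hull of $\mathcal{M}$, they lie inside that hull. What remains is to promote this effect-wise membership to a single decomposition valid simultaneously for all $b$ --- a genuine convex decomposition of $\Gamma_\eta(\mathcal{N}_y)$ \emph{as a measurement} --- which is carried out by bookkeeping the relabelings and using that the POVM normalization is automatically preserved under convex mixtures of relabeled measurements from $\mathcal{M}$.

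Finally, I would assemble the model. Let $\text{tr}(\mathcal{N}_{b\mid z}O_x)=\sum_a\int_\Lambda\pi(\lambda)\,p(a\mid x,\lambda)\,p(b\mid a,z,\lambda)$ be the classical model assumed for $(\{O_x\},\mathcal{M})$. Substituting the decomposition of step three into $\text{tr}(N_{b\mid y}\rho_x)=\sum_{z}\mu_y(z)\sum_{b'}r_{y,z}(b\mid b')\,\text{tr}(\mathcal{N}_{b'\mid z}O_x)$ yields $\text{tr}(N_{b\mid y}\rho_x)=\sum_a\int_\Lambda\pi(\lambda)\,p(a\mid x,\lambda)\,q(b\mid a,y,\lambda)$ with $q(b\mid a,y,\lambda):=\sum_{z}\mu_y(z)\sum_{b'}r_{y,z}(b\mid b')\,p(b'\mid a,z,\lambda)$; one checks $\sum_b q(b\mid a,y,\lambda)=1$, so this is a model of the form \eqref{eq:classical-model} with the \emph{same} $\pi(\lambda)$ and $p(a\mid x,\lambda)$, only the measurement response $q$ acquiring a dependence on $y$, which is permitted since in the prepare and measure causal structure the measurement device learns $y$. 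The hard part will be the third step: fixing the right ambient space and metric, coping with the fact that for $d>2$ the rank-$1$ projectors do not fill the whole sphere around $\bm{1}_d/d$, and, above all, turning effect-wise containment into a legitimate measurement decomposition so that $q$ is a bona fide response function; the remaining steps are essentially bookkeeping.
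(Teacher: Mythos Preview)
Your proposal is correct and follows essentially the same route as the paper's proof: reduce to rank-$1$ projective measurements via coarse-graining, use the depolarization identity $\text{tr}(N_{b\mid y}\rho_x)=\text{tr}\bigl(\Gamma_\eta(N_{b\mid y})\,O_x\bigr)$, observe that the depolarized effects land in the inscribed ball of $\textbf{conv}(\mathcal{M})$, and transport the assumed classical model for $(\{O_x\},\mathcal{M})$. The only quibble is a normalization mismatch in your third step---you compare a Hilbert--Schmidt distance $\eta\sqrt{1-1/d}$ to the radius $\eta$, whereas in the paper's unit-Bloch-vector parametrization the depolarized effects sit \emph{exactly} on the inscribed sphere rather than strictly inside---and the ``hard part'' you flag (promoting effect-wise containment to a genuine measurement decomposition) is precisely what the paper states as ``classicality follows for any convex combination of the operators in $\mathcal{M}$'' and likewise leaves at the level of an assertion.
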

    \begin{proof}
        We begin by recalling that each $\Pi_{b \vert y} \in \mathcal{M}$ may be associated to a unit vector $\bm{v}_{b|y} \in \mathds{R}^{(d^{2}-1)}$ by means of
        \begin{equation}
            \Pi_{b|y} = \frac{1}{d}\left(\bm{1}_{d} + c_{d}\sum_{i=1}^{d^{2}-1}(v_{b|y})_{i} \sigma_{i}\right),
        \end{equation}
        where $c_{d} = \sqrt{d(d-1)/2}$ and $\left\{ \sigma_{i} \right\}_{i}^{(d^{2}-1)}$ is a set of $d \times d$ traceless operators (generators of the $SU(d)$ group) that, together with $\bm{1}_{d}$, form an orthonormal basis for the space of $d\times d$ linear operators (with respect to the Hilbert-Schmidt inner product).
        
        With that in mind, let $\mathcal{M} = \{ \Pi_{b \vert y} \}_{b, y}$ be a finite set of projective measurements, where each $\Pi_{b \vert y}$ is a rank-1 projector (we will later show that this leads to no loss of generality), and $\sum_b \Pi_{b \vert y} = \bm{1}_d, \,\forall y$. We will write $\eta$ for the radius of the largest sphere that can be inscribed into $\textbf{conv}(\mathcal{M}) = \textbf{conv}(\{ \bm{v}_{b|y} \}_{b,y} )$, and $\{ O_x \}_{x=1}^{\abs{\mathcal{X}}}$ for the set of operators implicitly defined by $\rho_x = \eta O_x + (1 - \eta) \frac{\bm{1}_d}{d}$.
        
        When model \eqref{eq:classical-model} exist for preparations $\mathcal{S}$ and measurements $\mathcal{M}$, it also does for any convex combination of the operators in $\mathcal{M}$. In particular, then, classicality follows for any measurements in the largest sphere that can be inscribed into $\textbf{conv}(\mathcal{M})$ (see Fig.~\ref{fig:measurements} for a representation in $d=2$). As any measurement operator on such a sphere is related to a valid rank-1 projector $\Pi_{b \mid u}$ through $\Pi_{b \vert u}^\eta = \eta \Pi_{b \mid u} + (1 - \eta) \mathbf{1}_d/d$, a simple calculation yields
        \begin{equation}
            \text{tr} \left( O_x \Pi_{b \vert u}^\eta \right) = \text{tr}\left( \rho_x \Pi_{b \vert u} \right), \quad \forall x, b, u .
            \label{eq:equal-traces}
        \end{equation}

        Eq. \eqref{eq:equal-traces} tells us the result of applying any projection $\Pi_{b \mid u}$ to a density operator $\rho_x$ is equivalent to applying a depolarized projection $\Pi_{b \vert u}^\eta$ to $O_x$ --- an inflated instance of $\rho_x$.
        
        With that in mind, suppose we probe each $O_x$ with every $\Pi_{b \vert y} \in \mathcal{M}$ and find the observed behavior has a classical model as given by eq. \eqref{eq:classical-model}. In that case, as already noted, the model would exist for all $\Pi_{b \vert u}^\eta$, as they all can be written as convex combinations of the $\Pi_{b \vert y}$. Thereon, eq. \eqref{eq:equal-traces} affirms the $\rho_x$ are classically simulable for all rank-1 projectors.
        
        To see this condition is also valid for arbitrary projections, first notice that in $\mathbb{C}^2$ all non-trivial projections are unit-rank. Even though this is not true for $d > 2$, it is true that any projective measurement can be seen as a rank-1 projective measurement with coarse graining. Hence, the result holds for all projective measurements.
    \end{proof}
    
    \begin{figure}
        \centering
        \includegraphics[width=.7\columnwidth]{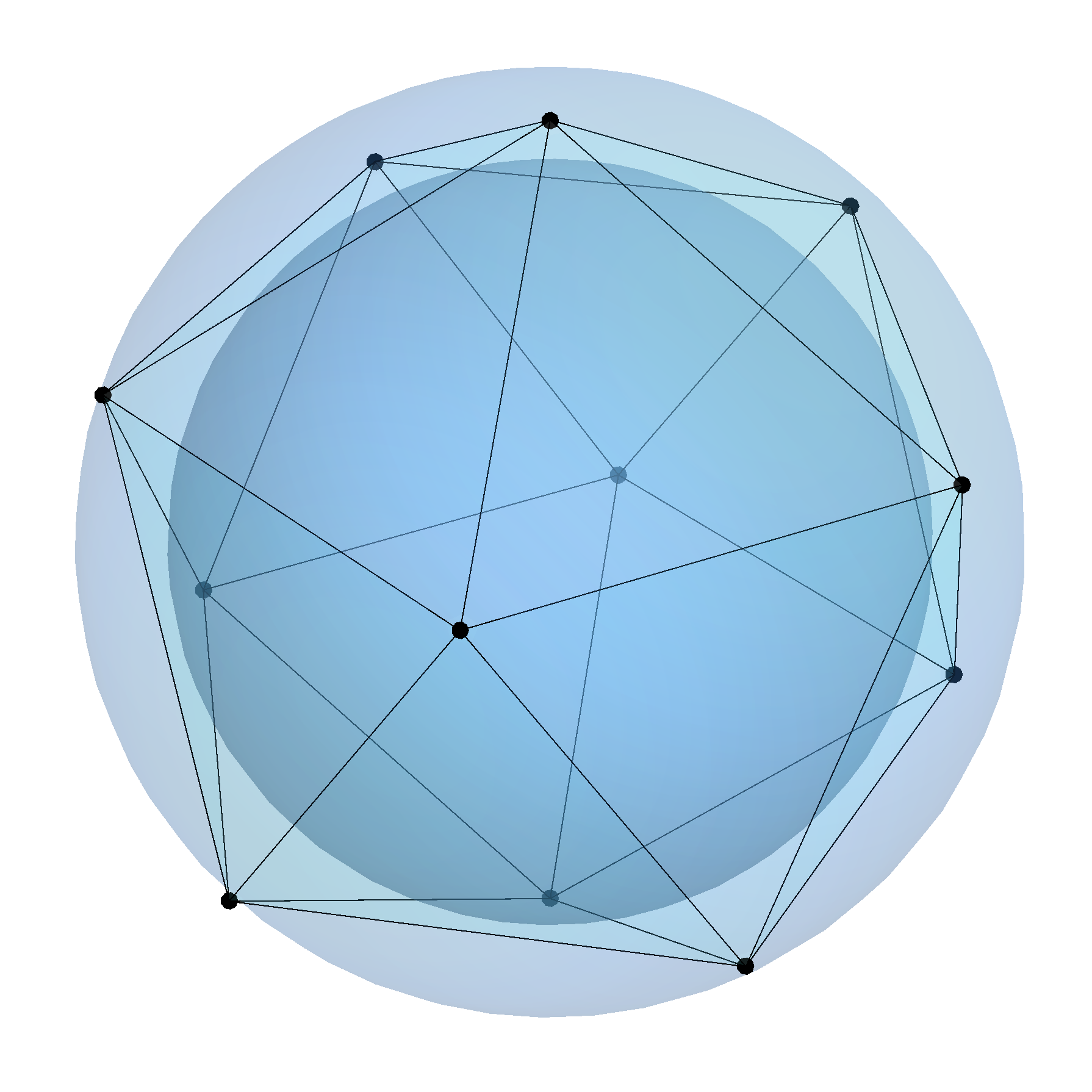}
        \caption{Representation of the method for $d = 2$. Each vertex on the Bloch sphere represents a measurement operator. To every $\Pi_{0 \vert y}$ we associate a corresponding antipodal $\Pi_{1 \vert y}$. Measurements inside the set enclosed by $\textbf{conv}(\mathcal{M})$ can be simulated by mixing these extremal ones. In particular, any measurement $\{ \Pi_{b \vert u}^\eta \}_b$ in a ball with radius $\eta$ inscribed in the polytope is simulable in such manner.}
        \label{fig:measurements}
    \end{figure}
    
    Strikingly, by probing the statistics of only finitely many measurements, Theorem \ref{thm:projective} provides us with a sufficient condition to certify that the set of prepared states $\mathcal{S} = \{ \rho_x \}_{x=1}^{\abs{\mathcal{X}}}$ can only exhibit classical correlations, even if infinitely many projective measurements were to be performed. This sufficient condition for classicality also becomes necessary when $\eta=1$, in which case we recover the brute-force approach of testing all possible measurements. Additionally, this classicality condition can be expressed as the following feasibility problem
    \begin{subequations}
		\begin{alignat}{2}
			&\text{given}    &\quad & \mathcal{S},\, \mathcal{M},\, \eta, \{ \lambda \} \\
	        &\text{find}   &	  & \pi(\lambda) \\
			&\text{s.t.}    &      & \rho_x = \eta O_x + \left( 1 - \eta \right) \frac{\bm{1}_d}{d}, \,\forall x \\
			&                  &      & \text{tr}(\Pi_{b \vert y} O_x) = \sum_{a, \lambda} \pi(\lambda) p(a \vert x, \lambda) p(b \vert a, y, \lambda), \,\forall\, b, x, y \label{eq:program-model1}\\
			&				   &	  & \pi(\lambda) \geq 0, \\
			&				   &	  & \sum_\lambda \pi(\lambda) = 1,
		\end{alignat}
		\label{eq:factibility}
	\end{subequations}
    which is a linear program and thus can be solved efficiently. In particular, condition \eqref{eq:program-model1} enforces the existence of the classical model \eqref{eq:classical-model}, and Fine's theorem was used to cast the integral as a finite sum \cite{fine-hidden-1982}.
    
    Dichotomic projective measurements are the extremal two-effect POVMs, so Theorem \ref{thm:projective} is also a condition for classicality under all such measurements. To extend our result to generalized measurements, we observe that any POVM collection $\mathcal{M} \subseteq \mathcal{P}(d,n)$, where $\mathcal{P}(d,n)$ is the set of generalized measurements with $n$ effects acting on $d$-dimensional preparations, can be simulated by projective measurements and classical processing after a certain amount $t$ of depolarization on its effects \cite{oszmaniec-simulating-2017}, which leads to the following extension of Theorem \ref{thm:projective}:
    \begin{thm}[Preparation classicality for all POVMs]
        Let $\Phi_{t}(\mathcal{M})$, where $\Phi_t(\cdot) = t(\cdot) + (1-t) \frac{\text{tr}(\cdot)}{d}\bm{1}_d$ is a depolarizing channel acting on $\mathcal{M}$'s effects, be projective-simulable for any $\mathcal{M} \subseteq \mathcal{P}(d,n)$. If preparation $\rho^\prime = \frac{1}{t} \left( \rho - \frac{1-t}{d} \bm{1}_d \right)$ has a classical model for all projective measurements, then $\rho$ is classically reproducible for all POVMs.
        \label{thm:povms}
    \end{thm}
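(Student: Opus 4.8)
The plan is to mirror the structure of the proof of Theorem~\ref{thm:projective}, replacing the geometric depolarization of measurements (shrinking onto an inscribed ball) by the operational depolarization $\Phi_t$ that renders any POVM projective-simulable. The three ingredients are: (i) an algebraic identity of the same flavour as eq.~\eqref{eq:equal-traces} relating the statistics of $\mathcal{M}$ on $\rho$ to those of $\Phi_t(\mathcal{M})$ on $\rho'$; (ii) the result of Ref.~\cite{oszmaniec-simulating-2017} that $\Phi_t(\mathcal{M})$ is a classical post-processing of projective measurements; and (iii) the hypothesis that $\rho'$ is classical for all projective measurements. The substantive work is stitching (ii) and (iii) together so that the outcome still has the causal form \eqref{eq:classical-model}.

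Concretely, first I would fix an arbitrary POVM collection $\mathcal{M}=\{M_{b\vert y}\}_{b,y}\subseteq\mathcal{P}(d,n)$ and record the identity $\operatorname{tr}(M_{b\vert y}\rho)=\operatorname{tr}(\Phi_t(M_{b\vert y})\rho')$, valid for every effect. This follows from $\rho=t\rho'+\tfrac{1-t}{d}\bm 1_d$, linearity of the trace, and the fact that $\operatorname{tr}(\rho')=1$ automatically (so no normalization constraint is needed beyond $\rho'\geq 0$, which is implicit in the hypothesis that $\rho'$ is a bona fide preparation). Thus probing $\mathcal{M}$ on $\rho$ yields exactly the behaviour obtained by probing $\Phi_t(\mathcal{M})$ on $\rho'$. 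Next, by Ref.~\cite{oszmaniec-simulating-2017} the depolarized collection $\Phi_t(\mathcal{M})$ is projective-simulable: for each $y$ there are a distribution $r(\mu\vert y)$, projective measurements $\{\Pi^{\mu,y}_{b'}\}_{b'}$ and a stochastic post-processing $p(b\vert b',y,\mu)$ with $\Phi_t(M_{b\vert y})=\sum_\mu r(\mu\vert y)\sum_{b'}p(b\vert b',y,\mu)\,\Pi^{\mu,y}_{b'}$, all drawn from a finite set since $\mathcal{Y}$ and $\mathcal{B}$ are finite.

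I would then apply the hypothesis to the finite collection of projective measurements $\{\Pi^{\mu,y}_{b'}\}$, indexed by the compound setting $(\mu,y)$, obtaining weights $\pi(\lambda)$ and responses $p(a\vert x,\lambda)$, $p(b'\vert a,(\mu,y),\lambda)$ realizing model \eqref{eq:classical-model} for $\rho'$ against those measurements. Substituting this into the decomposition of $\Phi_t(M_{b\vert y})$ and then into the identity of step one, I collect all the $y$-, $\mu$- and $b'$-dependent pieces into a single response function $\tilde p(b\vert a,y,\lambda):=\sum_\mu r(\mu\vert y)\sum_{b'}p(b\vert b',y,\mu)\,p(b'\vert a,(\mu,y),\lambda)$. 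One checks $\tilde p\geq 0$ and $\sum_b\tilde p(b\vert a,y,\lambda)=1$ (the post-processing normalizes, then the $b'$-sum normalizes, then the $\mu$-sum normalizes), so that $p(b\vert x,y)=\operatorname{tr}(M_{b\vert y}\rho)=\sum_a\int_\Lambda\pi(\lambda)\,p(a\vert x,\lambda)\,\tilde p(b\vert a,y,\lambda)$ is a genuine classical model with the same message alphabet. Since $\mathcal{M}$ was arbitrary, $\rho$ is classically reproducible for all POVMs.

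The main obstacle is the bookkeeping in this last step: one must ensure the extra randomness $\mu$ introduced by the POVM-to-projective simulation lives entirely on the measurement side and is never needed by the preparation response $p(a\vert x,\lambda)$ --- otherwise the causal structure of Fig.~\ref{fig:pm-dag} (in particular the no-signalling from $M$'s internal choices back to $P$) would be violated. This goes through precisely because projective-simulability of Ref.~\cite{oszmaniec-simulating-2017} is a purely local post-processing of the measurement device, so $\mu$ can be marginalized inside $\tilde p$. A secondary point to handle carefully is that the depolarization strength $t$ supplied by Ref.~\cite{oszmaniec-simulating-2017} is universal over $\mathcal{P}(d,n)$, so $\rho'$ depends only on $(d,n,t)$ and not on the particular $\mathcal{M}$; this is what allows a single hypothesis on $\rho'$ to certify all $n$-effect POVMs at once.
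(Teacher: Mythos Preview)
Your proof is correct and follows essentially the same approach as the paper's: both rest on the trace identity $\operatorname{tr}(M_{b\vert y}\rho)=\operatorname{tr}(\Phi_t(M_{b\vert y})\rho')$ (the paper phrases it as self-duality of $\Phi_t$, you derive it directly from $\rho=t\rho'+(1-t)\bm{1}_d/d$), the projective-simulability of $\Phi_t(\mathcal{M})$ from Ref.~\cite{oszmaniec-simulating-2017}, and the classicality hypothesis on $\rho'$. Your version is more explicit in assembling the resulting classical model and checking that the simulation randomness $\mu$ stays on the measurement side, details the paper leaves implicit.
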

    \begin{proof}
        Let $\rho^\prime$ be such that probabilities $\text{tr}(\Pi_{b \vert y} \rho^\prime)$ admit description \eqref{eq:classical-model} for all projections $\Pi_{b \vert y}$, and $\mathcal{M} = \{ M_{b \mid y} \}_{b, y}$ be a collection of POVMs. As the statistics generated by $\Phi_{t}(\mathcal{M})$ for any $\mathcal{M} \subseteq \mathcal{P}(d,n)$ can be reproduced by projective measurements and classical processing, it must be that the behavior $\{ \text{tr}[\Phi_{t}(M_{b \mid y}) \rho^\prime] \}_{b, y}$ is also classically reproducible. But $\Phi_t(\cdot)$ is self-dual, hence $\text{tr}[\Phi_{t}(M_{b \mid y}) \rho^\prime] = \text{tr}[M_{b \mid y} \Phi_{t}(\rho^\prime)] = \text{tr}(M_{b \mid y} \rho), \forall b, y$. Therefore, $\rho$ has a classical model for all POVMs.
    \end{proof}
    
    Accordingly, showing that preparations $\rho_x$ have a classical description for all POVMs is equivalent to proving that preparations $\rho_x^\prime = \frac{1}{t} \left( \rho_x - \frac{1-t}{d} \bm{1}_d \right)$ are classically reproducible for all projective measurements, which can be done by providing $t$ and the relevant additional restrictions to program \eqref{eq:factibility}. %\carlos{Escrevo o programa inteiro?} \textcolor{red}{Não acho que seja preciso neste caso}.
    For $d=2$, the POVM $A_{\text{tetra}} = \{ \frac{1}{4}\left( \bm{1} + \bm{v}_i \cdot \sigma \right) \}_i$, where $\bm{v}_i$ are the vertices of a regular tetrahedron, have the highest noise robustness, thus any $t$ such that $\Phi_t(A_{\text{tetra}})$ is projective-simulable also makes so any $A \in \mathcal{P}(2, n)$. For qubits, $t = \sqrt{2/3} - \epsilon$ suffices. On higher $d$, $t=1/d$ is a lower bound on $t$ that can be tightened through semidefinite programming \cite{oszmaniec-simulating-2017}.
    
\begin{figure*}[t!]
        \centering
        \subfigure[Icosahedron]{\label{fig:hm-icos}\includegraphics[height=.31\linewidth]{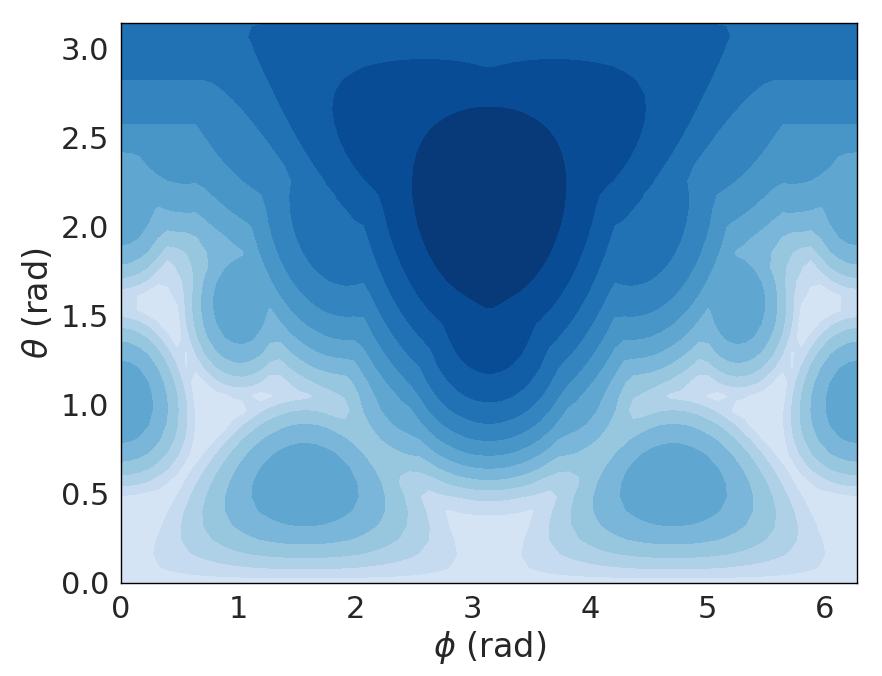}}
        \subfigure[Rhombicuboctahedron]{\label{fig:hm-romb}\includegraphics[height=.31\linewidth]{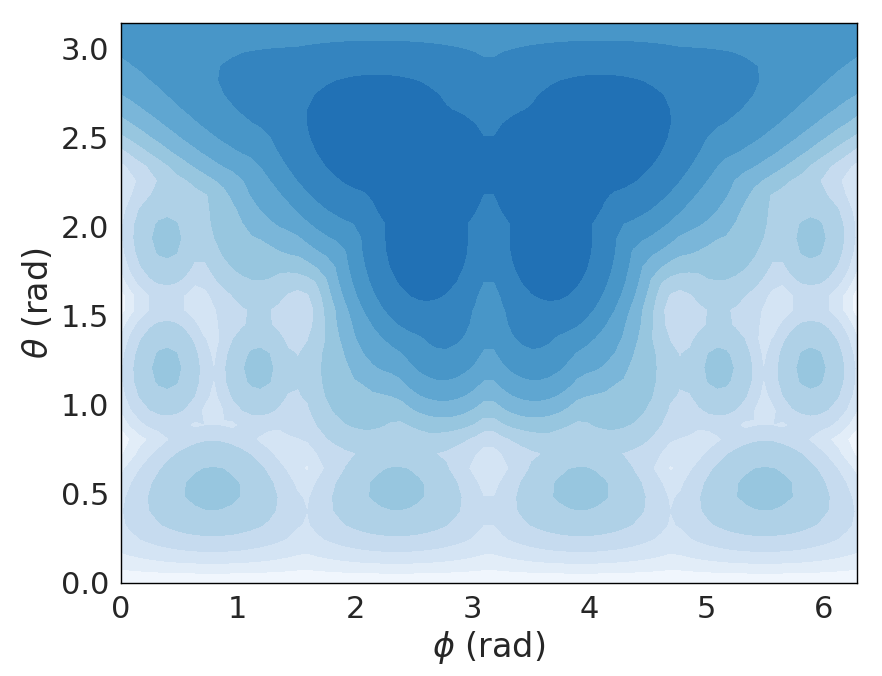}}
        \subfigure{\includegraphics[height=.31\linewidth]{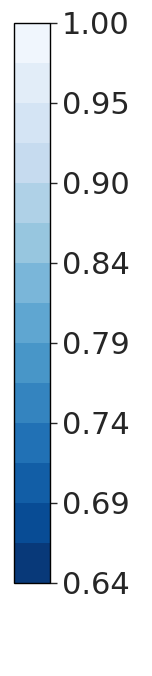}}
        \caption{Application of program \eqref{eq:maximization} to $\mathcal{S}(\theta, \phi) = \{ \rho_{\bm{x}}, \rho_{\bm{z}}, \rho_{\bm{r}(\theta, \phi)} \}$. Levels are the maximum visibility $\alpha$ such that preparation set $\alpha \mathcal{S}(\theta, \phi)$ has a classical model. \subref{fig:hm-icos} For the $\abs{\mathcal{Y}} = 6$ icosahedron measurements, $\eta \approx 0.79$, and program \eqref{eq:maximization} can be directly applied. \subref{fig:hm-romb} A rhombicuboctahedron corresponds to $\abs{\mathcal{Y}} = 12$ projective measurements and $\eta \approx 0.86$, but as the number of deterministic strategies scales exponentially, the computation is only possible by iteratively optimizing over subsets of deterministic strategies.}
    \end{figure*}

 \subsection{Computational analysis}
 The computational bottleneck of program \eqref{eq:factibility} lies at the fact that decomposing the deterministic strategies $\lambda$ into their extremal points results in an exponentially large set, with $N_\lambda \propto \abs{\mathcal{B}}^{\abs{\mathcal{A}} \abs{\mathcal{Y}}}$ extremal points. Consequently, regardless of the efficiency of linear programming algorithms, the size of program \eqref{eq:factibility} scales exponentially in the number of measurements. 
    In principle, this precludes us from using too large sets of measurements, and consequently obtaining larger  values of $\eta$, but we can circumvent this issue by adapting the procedure outlined in \cite{fillettaz-algorithmic-2018}.
    
    Our strategy will be to avoid working on all $N_\lambda$ extremal points at once, and instead iteratively exploring the whole deterministic strategy space. Notice that the factilibity program \eqref{eq:factibility} may be equivalently written as a maximization program,
    \begin{subequations}
    	\begin{alignat}{2}
    		&\text{given}    &\quad & \mathcal{S},\, \mathcal{M},\, \eta, \{ \lambda \} \\
            &\underset{\pi(\lambda)}{\text{max.}}   &	  & \alpha \\
    		&\text{s.t.}    &      & \alpha \rho_x + (1 - \alpha)\frac{\bm{1}_d}{d} = \eta O_x + \left( 1 - \eta \right) \frac{\bm{1}_d}{d}, \,\forall x \\
    		&                  &      & \text{tr}(\Pi_{b \vert y} O_x) = \sum_{a, \lambda} \pi(\lambda) p(a \vert x, \lambda) p(b \vert a, y, \lambda), \,\forall\, b, x, y \\
    		&				   &	  & 0 \leq \alpha \leq 1 \\
    		&				   &	  & \pi(\lambda) \geq 0 \\
    		&				   &	  & \sum_\lambda \pi(\lambda) = 1
    	\end{alignat}
    	\label{eq:maximization}
    \end{subequations}
    Allowing $\alpha = 0$ guarantees a solution will always exist, and obtaining $\alpha = 1$ amounts to \eqref{eq:factibility} being factible. One may interpret the program above as a search for the optimal weights $\pi(\lambda)$ such that a convex combination of extremal points of the local polytope describe the behavior of our system. Carathéodory's Theorem \cite{rockafellar-convex-1970} states that at most $d+1$ extremal points are necessary to optimally describe any point of a $d$-dimensional convex set, hence most of the $\pi(\lambda)$ found will be zero. We cannot know, beforehand, which points make for an optimal description, so we take $N_\lambda^\prime \gg d+1$ --- but much smaller than $N_\lambda$ --- points and optimize over them. To set up the next iteration, all $\pi(\lambda) = 0$ in this result can be discarded and replaced by previously unexplored deterministic strategies, and we run program \eqref{eq:maximization} again. As, at each round, we are keeping all optimal $\lambda$ from the previous, the optimal value $\alpha$ will be non-decreasing between iterations. Furthermore, for so large $\abs{\mathcal{Y}}$ that it would be prohibitive to enumerate and keep track of all previously visited strategies, we observe that simply randomly sampling $\lambda$'s on each run of \eqref{eq:maximization} makes our procedure rapidly converge to $\alpha^*$, which assumes a constant value for all subsequent iterations and is interpreted as a lower bound on the maximum visibility imposed on the preparations such that their behavior is classical.

    To illustrate the application of Theorem \ref{thm:projective} and the procedure just described, we consider the preparation set $\mathcal{S}(\theta, \phi) = \{ \rho_{\bm{x}}, \rho_{\bm{z}}, \rho_{\bm{r}(\theta, \phi)} \}$ (that is, $\vert \mathcal{X}\vert=3$), where $\rho_{\bm{v}}$ denotes a qubit state with Bloch vector $\bm{v}$, and the unit vector $\bm{r}(\theta, \phi)$ is given by its spherical polar and azimuthal angles, respectively. Arranging our measurements operators as those associated to the vertices of an icosahedron ($\abs{\mathcal{Y}} = 6$ with $\eta \approx 0.79$), we get a fairly small problem that can be solved either directly or iteratively (Fig.~\ref{fig:hm-icos}). With twice the amount of measurements arranged as the vertices of a rhombicuboctahedron ($\eta \approx 0.86$), it is only possible to compute fig. \ref{fig:hm-romb} using the iterative procedure. The advantage, however, is remarkable, with more measurements resulting in increased visibilities.

    %%%%%%%%%%%%%%%%%%%%%%%%%%%%%%%%%%%%%%%%%%%%%%%%%%%%%%%%%%%%%%%%%%%%

    \subsection{Non-classicality activation}
    
        In a Bell scenario, an entangled state that can only lead to local correlations can have its non-locality activated in at least two manners: with a single copy, by proceeding with local filtering (a phenomenon usually referred as hidden nonlocality) \cite{popescu-filtering-1995, hirsch-hidden-2013,gallego2014nonlocality}, or by using many copies of the quantum state \cite{navascues-activation-2011, palazuelos-superactivation-2012,cavalcanti2011quantum}. In turn, in a PAM scenario a new possibility is open. Consider a set of $n+1$ prepared states such that any subset of  $n$ of them can only generate correlations describable by the classical model \eqref{eq:classical-model}. However, if the $n+1$ quantum states can lead to non-classical correlations then we can say the non-classicality of the other $n$ states is activated by the preparation of this extra $(n+1)$-th state. For our purposes it will suffice to consider the case where $n=3$, that is, we have four possible preparations ($\vert \mathcal{X} \vert=4$) such that any combination of only three of them will always lead to classical correlations. All other variables are dichotomic, that is, $\abs{\mathcal{B}} = \abs{\mathcal{A}} = \abs{\mathcal{Y}} = 2$.
    
        In this specific PAM scenario, there are only two classes of inequalities, the violation of which certify non-classicality. If we are able to find a set of four states, any three of which only generate classical correlations according to our criteria, but nonetheless violate one of these inequalities when taken together, we would thus have proven the activation of non-classicality. Towards that end, we consider the inequality given by \cite{pawlowski-qkd-2011}
        \begin{equation}
            \begin{split}
                S = E_{11} - E_{12} - E_{21} + E_{22} + \\ - E_{31} - E_{32} + E_{41} + E_{42} \leq 4,
            \end{split}
            \label{eq:inequality}
        \end{equation}
        where $E_{xy} = p(0 \vert x, y) - p(1 \vert x, y)$ is the expectation value of observable $y$ applied to preparation $x$. When preparations are quantum states and the measurements are extremal, $E_{xy} = \text{tr}(M_{0 \vert y} \rho_x) - (M_{1 \vert y} \rho_x) = \bm{r}_x \cdot \bm{q}_y$, where $\bm{r}_x$, $\bm{q}_y$ are the Bloch vectors that parametrize preparation $x$ and measurement $y$, respectively.
    
        We define $\mathcal{S}(\alpha, \theta) = \{ \rho_{\bm{r_1}}, \rho_{\bm{r_2}}, \rho_{\bm{r_3}}, \rho_{\bm{r_4}} \}$ as the preparation set according to Fig. \ref{fig:preparations}, where $\alpha$ is a shrinking factor from the surface of the Bloch sphere. In turn, we choose projective measurements parameterized by the vectors $\bm{q}_1 = - \bm{x}$ and  $\bm{q}_2 = \bm{z}$, for which we obtain that $S = 4\sqrt{2} \alpha \sin \theta$. This shows that, for a large span of $\alpha$ and $\theta$, this family of preparations violate the bound $S\leq 4$, thus exhibiting non-classicality.
        
        \begin{figure}
            \centering
            \includegraphics[width=.8\columnwidth]{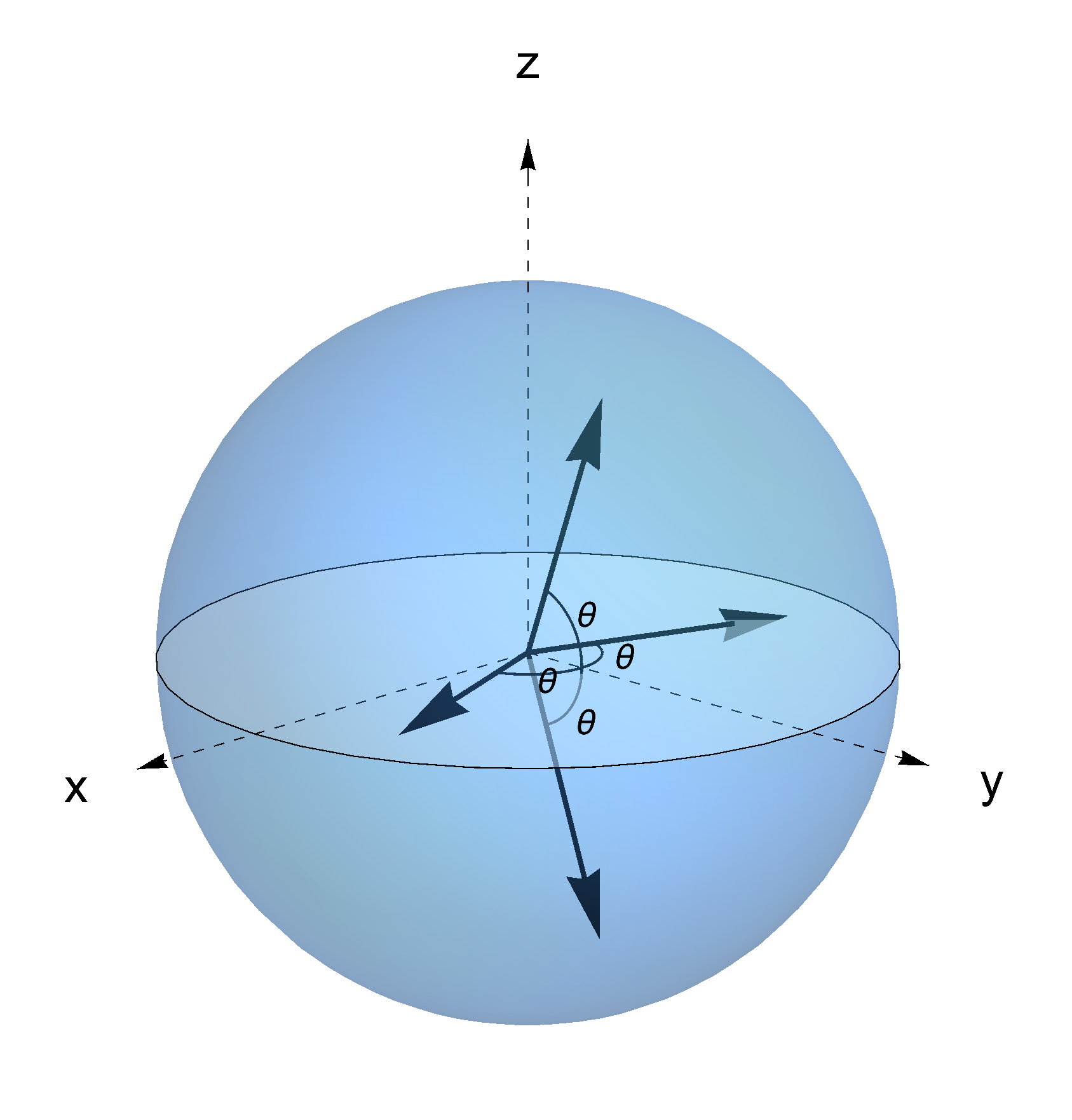}
            \caption{Preparations $\mathcal{S}(\alpha, \theta) = \{ \rho_{\bm{r_1}}, \rho_{\bm{r_2}}, \rho_{\bm{r_3}}, \rho_{\bm{r_4}} \}$ for $\alpha=0.8$. At $\theta=0$ all preparations are at $\alpha \bm{y}$. For $\theta = \pi/2$, $\mathcal{S} =  \{ -\alpha\bm{x}, \alpha\bm{x}, -\alpha\bm{z}, \alpha\bm{z} \}$, presenting the largest violation of inequality \eqref{eq:inequality}.}
            \label{fig:preparations}
        \end{figure}
        
        To show this non-classicality is a genuine non-classicality activation for dichotomic measurements, we employ our general method to show the behaviors of any subset of three elements of $\mathcal{S}$ are classically reproducible. This was done by applying program \eqref{eq:maximization} with measurements arranged as a rhombicuboctahedron, and corresponding $\eta \approx 0.86$. As a result, for each $\theta$, the optimal value $\alpha^*$ stands for the maximum purity of the preparations such that there is a classical model for all triadic subsets of $\mathcal{S}$, and every $\alpha < \alpha^*$ represents classical preparations (fig. \ref{fig:activation}). As shown in the shaded region, we will have non-classicality activation for any value $\alpha < \alpha^*$ such that inequality $S$ is violated.

    \subsection{Quantum advantage activation in RACs}
    \label{sec:racs}
            
        The non-classicality results shown here also have implications for the application of the PAM scenario in random access codes (RACs) \cite{li2012semi, pawlowski-qkd-2011}. An $n \mapsto i$ RAC can be understood as a communication task where, at each run, Alice receives a string $(m_1, \dots, m_n)$, with each $m_i \in \{ 1, \ldots, d \}$, which she then encodes in $i < n$ either classical or quantum $d$-level systems that will be sent to Bob. Upon receiving this limited communication from Alice and an uniformly sampled query $y \in \{1, \ldots, n \}$, Bob is asked to determine the letter $m_y$ from her string. Their joint task is to maximize the average probability of success of his guess, $b$, of the input bits of Alice, over all encoding/decoding strategies. The figure of merit is given by
        \begin{equation}
            p_{suc} = \frac{1}{n d^n} \sum_{m_1, \ldots, m_n, y} p(b = m_y \mid m_1,\ldots, m_n, y ),
            \label{eq:psuc}
        \end{equation}
        where the factor $\frac{1}{nd^{n}}$ comes from the assumption that the input dits as well as the variable $y$ are uniformly distributed. 
        
     \begin{figure}
            \centering
            \includegraphics[width=\columnwidth]{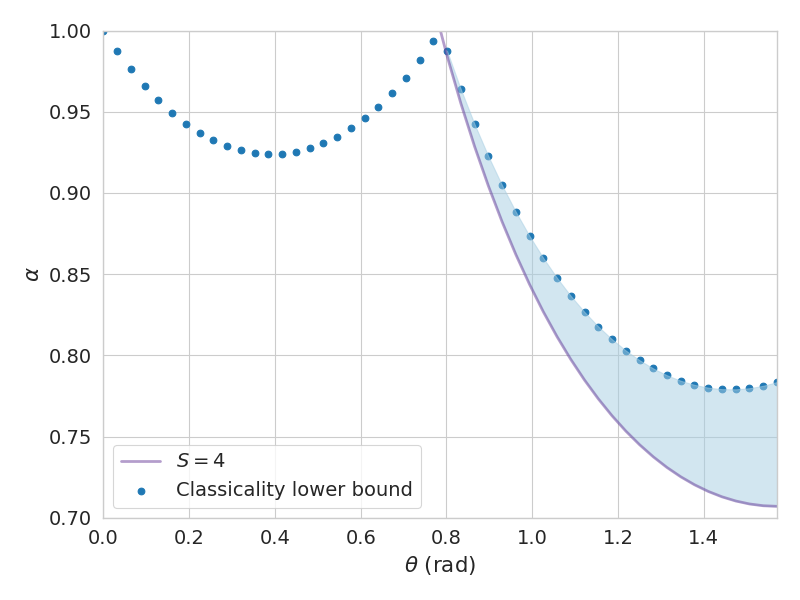}
            \caption{Non-classicality activation for the preparations in fig. \ref{fig:preparations} and measurements arranged in a rhombicuboctahedron with corresponding $\eta \approx 0.86$. As $S \propto \alpha$, every preparation set above the $S=4$ curve is non-classical. On the other hand, the blue scatter shows the maximum visibility such that any triad of states in the preparation set are classical. The shaded region thence stands for preparations that exhibit non-classicality activation by increasing the number of preparations.}
            \label{fig:activation}
        \end{figure}

        A PAM scenario with $\vert \mathcal{X} \vert = 4 $ may be mapped to a $2 \mapsto 1$ binary RAC by, for instance, mapping $x = 1 \mapsto (m_1 = 0, m_2 = 0)$, $x = 2 \mapsto (m_1 = 0, m_2 = 1)$, $x = 3 \mapsto (m_1 = 1, m_2 = 0)$, $x = 4 \mapsto (m_1 = 1, m_2 = 1)$. In this case, as shown in \cite{pawlowski-qkd-2011}, the probability of success \eqref{eq:psuc} can be directly linked with inequality \eqref{eq:inequality} in a way that $p_{suc} = \frac{S+8}{16}$. Indeed, the classical PAM bound $S=4$ corresponds to the optimal probability of success of $p_{suc}=3/4$, achievable when Alice sends a classical bit. On the other hand, for any quantum violation of the inequality \eqref{eq:inequality}, we do have a corresponding quantum advantage in a quantum RAC. In light of our non-classicality activation result, this means that we have identified sets of four qubits states being prepared by Alice, such that any three of them are classical but altogether offer an advantage in a relevant quantum communication protocol.

%%%%%%%%%%%%%%%%%%%%%%%%%%%%%%%%%%%%%%%%%%%%%%%%%%%%%%%%%%%%%%%%%%%%

\section{Witnessing the classicality of measurements}

    As much as quantum states may be seen as resources in diverse correlation scenarios, one may also look the other way around and inquire whether a given set of quantum measurements are useful for unveiling non-classicality. For the former, we have shown that Theorem \ref{thm:projective} allows to certify  that a preparation set only generates classical statistics regardless of which and how many projective measurements are applied. It may, however, be modified to treat measurements as resources, and certify a set of projective measurements can only given rise to classical statistics for all possible preparations.

    \begin{thm}[PMs classicality for all preparations]
        Let $\mathcal{M} = \{ \Pi_{b \mid y } \}_{b, y}$ be a collection of projectors on $\mathbb{C}^d$ such that $\sum_{b} \Pi_{b \mid y } = \bm{1}_d, \forall y $, and $\mathcal{S} = \{ \rho_x \}_x$ be a finite collection of pure $d$-dimensional preparations. Furthermore, define operators $O_{b \mid y}$ through
            \[ \Pi_{b \mid y} = \eta O_{b \mid y} + (1 - \eta) \frac{\bm{1}_d}{d} , \]
        where $\eta$ is the radius of the largest sphere that can be inscribed into the convex hull of $\mathcal{S}$. If there exists a probability distribution $\pi(\lambda)$ such that that all $p(b \vert x, y) = \text{tr}(O_{b\mid y} \rho_x)$ can be written as eq. \eqref{eq:classical-model}, then measurements $\mathcal{M}$ can never manifest non-classical statistics for any preparation set.
        \label{thm:pms-classicality1}
    \end{thm}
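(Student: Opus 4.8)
The plan is to prove this as the exact dual of Theorem~\ref{thm:projective}, with the roles of preparations and measurements interchanged. Just as in the proof of Theorem~\ref{thm:projective} each rank-1 projector was identified with a unit Bloch vector, here I use that every state $\rho$ on $\mathbb{C}^d$ corresponds to a Bloch vector of norm at most one, with pure states on the unit sphere; under this identification $\textbf{conv}(\mathcal{S})$ is a polytope whose vertices are among the (pure) $\rho_x$, and $\eta$ is the radius of the largest inscribed ball, which --- because the depolarization defining $O_{b\mid y}$ shrinks toward $\bm{1}_d/d$ --- I take centered at the maximally mixed state. As in Theorem~\ref{thm:projective}, it suffices to treat the case where every $\Pi_{b\mid y}\in\mathcal{M}$ is rank-1: a general projective measurement is a coarse graining of a rank-1 one, and classicality is preserved under coarse graining of outcomes, so if the rank-1 refinement never yields non-classical statistics, neither does $\mathcal{M}$.

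The first key step is geometric. Let $\sigma$ be \emph{any} state on $\mathbb{C}^d$, with Bloch vector $\bm{s}$, $\|\bm{s}\|\le 1$. Its depolarization $\sigma^\eta := \eta\,\sigma + (1-\eta)\bm{1}_d/d$ has Bloch vector $\eta\,\bm{s}$ of norm at most $\eta$, hence lies inside the inscribed ball and therefore inside $\textbf{conv}(\mathcal{S})$; so there is a probability vector $(q_x)_x$, depending on $\sigma$, with $\sigma^\eta=\sum_x q_x\,\rho_x$. The second key step is the identity
\[
    \text{tr}\!\left(\Pi_{b\mid y}\,\sigma\right)=\text{tr}\!\left(O_{b\mid y}\,\sigma^\eta\right),\qquad\forall\,b,y ,
\]
obtained by expanding both sides via $\Pi_{b\mid y}=\eta O_{b\mid y}+(1-\eta)\bm{1}_d/d$ and using $\text{tr}(O_{b\mid y})=\frac{1}{\eta}\!\left(\text{tr}\,\Pi_{b\mid y}-(1-\eta)\right)=1$ for rank-1 projectors. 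This mirrors Eq.~\eqref{eq:equal-traces}: probing a genuine state $\sigma$ with the true projectors $\Pi_{b\mid y}$ is the same as probing the inflated effect $O_{b\mid y}$ with the depolarized state $\sigma^\eta$.

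It remains to transfer the classical model. By hypothesis there is a single $\pi(\lambda)$ together with response functions $p(a\mid x,\lambda)$, $p(b\mid a,y,\lambda)$ satisfying $\text{tr}(O_{b\mid y}\rho_x)=\sum_{a,\lambda}\pi(\lambda)p(a\mid x,\lambda)p(b\mid a,y,\lambda)$ for every $x$ in the finite set $\mathcal{S}$. Given an \emph{arbitrary} preparation set $\{\sigma_w\}_w$, decompose each $\sigma_w^\eta=\sum_x q^{(w)}_x\rho_x$ and compute
\[
    \text{tr}\!\left(\Pi_{b\mid y}\sigma_w\right)=\sum_x q^{(w)}_x\,\text{tr}\!\left(O_{b\mid y}\rho_x\right)=\sum_{a,\lambda}\pi(\lambda)\,\tilde p(a\mid w,\lambda)\,p(b\mid a,y,\lambda),
\]
where $\tilde p(a\mid w,\lambda):=\sum_x q^{(w)}_x\,p(a\mid x,\lambda)$ is a valid conditional distribution over $a$ for each $(w,\lambda)$. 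Since $\pi(\lambda)$ and $p(b\mid a,y,\lambda)$ are unchanged, this is a single classical model of the form~\eqref{eq:classical-model} reproducing $\{\text{tr}(\Pi_{b\mid y}\sigma_w)\}_{b,w,y}$; as $\{\sigma_w\}_w$ was arbitrary, $\mathcal{M}$ never yields non-classical statistics.

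The main obstacle, exactly as in Theorem~\ref{thm:projective}, is the geometric step: everything hinges on the depolarized image of \emph{every} physical state landing inside $\textbf{conv}(\mathcal{S})$, which is precisely what the inscribed-ball radius $\eta$ (centered at $\bm{1}_d/d$) guarantees, and this is the only place where the finiteness and geometry of $\mathcal{S}$ enter. A secondary technical point is the rank-1 reduction, needed so that $\text{tr}(O_{b\mid y})=1$ and the trace identity holds exactly; for $d=2$ it is automatic, and for $d>2$ it follows by coarse graining as in the proof of Theorem~\ref{thm:projective}.
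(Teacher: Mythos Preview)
Your proof is correct and follows essentially the same approach as the paper's: the three pillars---(i) classicality is preserved under convex combinations of preparations, (ii) every depolarized state $\sigma^\eta$ lies in $\textbf{conv}(\mathcal{S})$, and (iii) the trace identity $\text{tr}(\Pi_{b\mid y}\sigma)=\text{tr}(O_{b\mid y}\sigma^\eta)$ for rank-1 projectors---match the paper's (i)--(iii) exactly, with the rank-1 reduction via coarse graining handled identically. Your version is slightly more explicit in constructing the transferred encoding $\tilde p(a\mid w,\lambda)=\sum_x q_x^{(w)}p(a\mid x,\lambda)$ and in treating arbitrary states $\sigma$ directly rather than passing through pure states first, but these are presentational refinements of the same argument.
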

    \begin{proof}
        Noticing that (i) whenever model \eqref{eq:classical-model} may be used to describe the behavior of $\mathcal{S}$ under a given set of measurements, then any preparation set in $\textbf{conv}(\mathcal{S})$ is also classically reproducible when probed by those same measurements, (ii) that any density operator on the shrunken Bloch sphere of radius $\eta$ can be written as $\rho_{\bm{u}}^\eta = \eta \rho_{\bm{u}} + (1 - \eta) \bm{1}_d/d$ and is in $\textbf{conv}(\mathcal{S})$, and (iii) that, for rank-1 projectors, $\text{tr}(O_{b\mid y} \rho_{\bm{u}}^\eta) = \text{tr}(\Pi_{b\mid y} \rho_{\bm{u}})$, the argument follows analogously to Theorem \ref{thm:projective}'s: if by probing operators $O_{b \mid y}$ with a finite set $\mathcal{S}$ of preparations we find out a classical PAM model \eqref{eq:classical-model} exists, then by (i) and (ii) the model exists for all $O_{b \mid y}$ and any $\rho_{\bm{u}}^\eta$, which by (iii) means it exists for $\mathcal{M}$ and all pure preparations. Invoking the same reasoning as before to extend the result for projections of any rank and observing that, if $\mathcal{M}$ is classically reproducible for all pure states, it also is for convex combinations of pure states, then $\mathcal{M}$ is classically reproducible for all quantum states.
    \end{proof}
    
    Theorem \ref{thm:pms-classicality1} shows that by probing the behavior of a given set $\mathcal{M}$ of measurements on a finite set of preparations, we may certify $\mathcal{M}$ is PAM-classical, meaning it never generates non-classical statistics in the prepare and measure scenario, regardless what preparation set it acts upon. The choice of the probe preparations $\rho_x$ should be done as to maximize $\eta$, similarly to how we have done when choosing the probe measurements in Fig. \ref{fig:measurements}. When those are given, we arrive at the following linear factibility program.
    \begin{subequations}
		\begin{alignat}{2}
			&\text{given}    &\quad & \mathcal{S},\, \mathcal{M},\, \eta \\
	        &\text{find}   &	  & \pi(\lambda) \\
			&\text{s.t.}    &      & \Pi_{b \mid y} = \eta O_{b \mid y} + (1 - \eta) \frac{\bm{1}_d}{d}, \,\forall b, y \\
			&                  &      & \text{tr}(O_{b \mid y} \rho_x) = \sum_{a, \lambda} \pi(\lambda) p(a \vert x, \lambda) p(b \vert a, y, \lambda), \,\forall\, b, x, y \label{eq:program-model}\\
			&				   &	  & \pi(\lambda) \geq 0, \\
			&				   &	  & \sum_\lambda \pi(\lambda) = 1,
		\end{alignat}
		\label{eq:classical-measurements-maximization}
	\end{subequations}
    which can also be cast as a maximization program (cf. \ref{eq:maximization}) 
    to which the procedure of iteratively exploring the $\lambda$-space may likewise be applied. Withal, analogous considerations to the ones in theorem \ref{thm:povms} extend this certification of measurement classicality to generalized measurements.
    \begin{thm}[POVMs classicality for all preparations]
        If a generalized measurements set $\mathcal{M}$ is projective-simulable for a given amount $t$ of depolarization, and measurements $\phi_t(\mathcal{M})$ are certifiably classical for all operators in an inflated, generalized Bloch ball of radius $1/t$, then $\mathcal{M}$ is classical for all quantum preparations. 
        \label{thm:povms-classicality}
    \end{thm}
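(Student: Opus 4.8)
\emph{Proof plan.} This statement is the measurement‑resource counterpart of Theorem~\ref{thm:povms}, and the plan is to prove it by the same self‑duality reduction, now invoking Theorem~\ref{thm:pms-classicality1} where the preparation side used Theorem~\ref{thm:projective}. Write $\Phi_{1/t}$ for the inverse of the depolarizing map $\Phi_t$, that is, $\Phi_{1/t}(X) = \tfrac1t\bigl(X - \tfrac{1-t}{d}\text{tr}(X)\,\bm{1}_d\bigr)$. It is trace preserving, self‑adjoint with respect to the Hilbert--Schmidt product, satisfies $\Phi_t\circ\Phi_{1/t}=\mathrm{id}$, and dilates the generalized Bloch vector by $1/t$; hence it maps the state space onto the ball of radius $1/t$ of unit‑trace Hermitian but, since $1/t>1$, no longer positive operators. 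The availability of a single depolarization strength $t$ that makes every $\mathcal{M}\subseteq\mathcal{P}(d,n)$ projective‑simulable is the Oszmaniec \emph{et al.} input already recalled after Theorem~\ref{thm:povms}.

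The argument then proceeds in four short steps. First, for an arbitrary preparation $\rho$ set $\sigma:=\Phi_{1/t}(\rho)$, which by the above lies in the inflated, generalized Bloch ball of radius $1/t$. Second, invoke the hypothesis that $\Phi_t(\mathcal{M})$ is certifiably classical on that ball: each behaviour $\{\text{tr}(\Phi_t(M_{b\mid y})\,\sigma)\}_{b,y}$ admits the model~\eqref{eq:classical-model}. (The intended way to produce this certificate mirrors Theorem~\ref{thm:pms-classicality1}: since $\Phi_t(\mathcal{M})$ is projective‑simulable one writes $\Phi_t(M_{b\mid y})=\sum_\mu q(\mu)\sum_{b'}p(b\mid b',y,\mu)\,\Pi_{b'\mid y,\mu}$, runs the feasibility program~\eqref{eq:classical-measurements-maximization} on the projectors $\{\Pi_{b'\mid y,\mu}\}$ with probe operators whose convex hull already contains a ball of radius $1/t$, and then folds the classical post‑processing $\{q(\mu),p(b\mid b',y,\mu)\}$ into the returned model.) Third, by self‑adjointness of $\Phi_t$ and $\Phi_t\circ\Phi_{1/t}=\mathrm{id}$ one has $\text{tr}(\Phi_t(M_{b\mid y})\,\sigma)=\text{tr}(M_{b\mid y}\,\Phi_t(\sigma))=\text{tr}(M_{b\mid y}\,\rho)$ for every $b,y$. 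Fourth, combining the last two observations, $\{\text{tr}(M_{b\mid y}\rho)\}_{b,y}$ admits~\eqref{eq:classical-model}; since $\rho$ was an arbitrary state, no separate pure‑to‑mixed step is needed, and $\mathcal{M}$ is classical for every preparation set.

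The delicate part is entirely in making ``certifiably classical on the inflated ball of radius $1/t$'' rigorous and non‑vacuous. Theorem~\ref{thm:pms-classicality1} as stated only certifies classicality on the physical Bloch ball of radius $1$, so one must check that its shrinking/inscribed‑sphere construction survives when the probe ``preparations'' are allowed to be out‑of‑range, non‑positive operators chosen so that their convex hull reaches radius $1/t$, and that the corresponding program still returns a genuine probability distribution $\pi(\lambda)$ together with valid response functions. One must also verify that the classical post‑processing coming from projective‑simulability composes with that model without spoiling normalization or positivity, and confirm that $t$ can indeed be taken uniformly over all of $\mathcal{P}(d,n)$ (as in the qubit‑tetrahedron and $t=1/d$ remarks following Theorem~\ref{thm:povms}), so that the hypothesis can be met in concrete cases. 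These bookkeeping points across the inflation are where I expect the real work to lie.
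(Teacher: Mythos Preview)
Your proposal is correct and follows essentially the same route as the paper: inflate an arbitrary state $\rho$ to $\sigma=\Phi_{1/t}(\rho)$ on the $1/t$-ball, use the hypothesis that $\Phi_t(\mathcal{M})$ is classical there, and then transfer the classical model back to $\rho$ via the self-adjointness identity $\text{tr}(\Phi_t(M_{b\mid y})\sigma)=\text{tr}(M_{b\mid y}\rho)$. The ``delicate part'' you flag --- allowing the probe preparations in Theorem~\ref{thm:pms-classicality1} to be the non-positive operators $\rho_x^t=\Phi_{1/t}(\rho_x)$ --- is exactly what the paper does, and it dispatches it in a one-line remark after the proof rather than treating it as a substantive obstacle.
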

    \begin{proof}
        Given a set $\mathcal{S}^t = \{ \rho_x^t \}_x = \{ \frac{1}{t} \left[ \rho_x - \frac{1-t}{d} \bm{1}_d \right] \}_x$ of probe operators, where the $\rho_x$ are pure quantum states, theorem \ref{thm:pms-classicality1} may be employed to determine whether the projective-simulable $\phi_t(\mathcal{M})$ are classical in relation to all operators $\rho^t$ on the $1/t$-radius Bloch sphere where the operators in $\mathcal{S}^t$ lie. Observing that $\text{tr}[ \phi_t( M_{b \mid y}) \rho^t ] = \text{tr}[ M_{b \mid y} \phi_t(\rho^t) ] = \text{tr}( M_{b \mid y} \rho ), \,\forall \rho^t$, this is equivalent to determining the classicality of generalized measurements $\mathcal{M}$ in relation to any $\rho$, which are all possible quantum preparations.
    \end{proof}
    
    We finally remark that, while theorem \ref{thm:pms-classicality1} was stated in regard to \emph{pure} probes, in principle there is no issue in probing with the $\rho^t \in  \mathcal{S}^t$, as all quantum states are more mixed than those. 

    %%%%%%%%%%%%%%%%%%%%%%%%%%%%%%%%%%%%%%%%%%%%%%%%%%%%%%%%%%%%%%%%%%%%
    
    \subsection{Incompatible classical measurements}
    
        Quantum formalism is inlaid with the existence of quantities that may not simultaneously be known with arbitrary precision, which is one of many ways it defies our intuition. Any set of measurements with this property is called an incompatible measurement set, with compatibility being the opposite concept. Measurement incompatibility is known to be a necessary but insufficient condition for non-classicality manifestations in Bell non-locality \cite{bene-incompatibility-2018,hirsch-incompatibility-2018}, and it was also shown to be necessary \emph{and} sufficient for quantum advantage in EPR steering scenarios \cite{quintino-incompatibility-2014} and in $2 \mapsto 1$ binary RACs (see proposition 2 in \cite{carmeli-racs-2020}). Section \ref{sec:racs}'s observations shows us incompatibility is thereby necessary and sufficient for non-classicality in the aforementioned $\abs{\mathcal{X}} = 4$, $\abs{\mathcal{B}} = \abs{\mathcal{A}} = \abs{\mathcal{Y}} = 2$ PAM scenario. That sufficiency does not hold in general will be shown through an application of our measurements' classicality certification method.
        
        Measurement compatibility may be understood as joint measurability. Let $\mathcal{M} =  \{ M_{b \mid y} \}_{b, y}$ be any set of measurements, with each having the same number of outcomes for convenience. Whenever a $J_{\bm{\ell}}$, with $\bm{\ell} = \ell_1 \ell_2 \ldots \ell_\abs{\mathcal{Y}}$, and each $\ell_i \in \{ 1, \ldots, \abs{\mathcal{B}} \}$, is a positive semidefinite operator such that $\sum_{\bm{\ell}} J_{\bm{\ell}} = \bm{1}$ and $\sum_{\bm{\ell}} J_{\bm{\ell}} \delta_{\ell_x, a} = M_{a \mid x}$, then $\mathcal{M}$ is said to be jointly measurable, in the sense that the so-called parent measurement $J_{\bm{\ell}}$ is a single, well defined measurement from which every $M_{a \mid x}$ may be recovered. Whenever such a $J_{\bm{\ell}}$ does not exist, $\mathcal{M}$ is incompatible, or not jointly measurable.
        
        Another invaluable concept is that of incompatibility robustness $\chi^*_\mathcal{M}$ \cite{designolle-incompatibility-2019}, measuring how compatible measurements $\mathcal{M}$ are through
        \begin{equation}
            \chi^*_\mathcal{M} = \sup_{\substack{\chi \,\in\, [0, 1]\\ \{N_{b \mid y}\} \,\in\, \textbf{N}( \mathcal{M} )}} \{ \chi \mid \chi \{ M_{b \mid y} \} + (1 - \chi) \{ N_{b \mid y} \} \in \textbf{JM} \} .
            \label{eq:incompatibility-robustness}
        \end{equation}
        In this definition, $\textbf{JM}$ is the set of jointly measurable measurements, and $\textbf{N}$ is a noise model that, possibly depending on the $M_{b \mid y}$, determines the noise set, which must contain at least one jointly measurable set of measurements. Given $\textbf{N}$, the lower the $\chi^*_\mathcal{M}$, the more incompatible the measurements are, with $\chi^*_\mathcal{M} = 1$ if and only if our measurement set is jointly measurable. For closed noise sets, eq. \eqref{eq:incompatibility-robustness} turns into a maximization problem that may be written as a semidefinite program (see Appendix E in \cite{designolle-incompatibility-2019}). This is true, in particular, for the identity noise, $N_{b|y} = \bm{1}/\abs{\mathcal{B}}$, which is a common choice when unbiased noise is of interest.
        
        Determining $\chi^*_\mathcal{M}$ thence amounts to choosing a noise model with the required properties and optimizing \eqref{eq:incompatibility-robustness} via semidefinite programming. Defining $\mathcal{M}$ as the mirror-symmetric measurements shown in fig. \ref{fig:mirror-symmetric-measurements} and choosing a random noise map, $\chi^*_\mathcal{M}$ behaves as shown in Fig. \ref{fig:incompatibility-vs-classicality}. Any value of $\chi$ above the incompatibility robustness curve represents incompatible measurements. Applying program \eqref{eq:classical-measurements-maximization} as a maximization problem to these same measurements, we obtain lower bounds for their classicality, meaning that everything under the measurement classicality curve stands for measurements which are not able to generate non-classical statistics, irrespective of the preparation set we choose. As, in the shaded region, this is above the incompatibility curve, we conclude there are incompatible measurement sets for which no preparations can exhibit advantage over communicating through two-dimensional classical systems. Put in other words, incompatibility is not in general sufficient for non-classicality in the prepare and measure scenario.
        
        \begin{figure}
            \centering
            \includegraphics[width=.8\columnwidth]{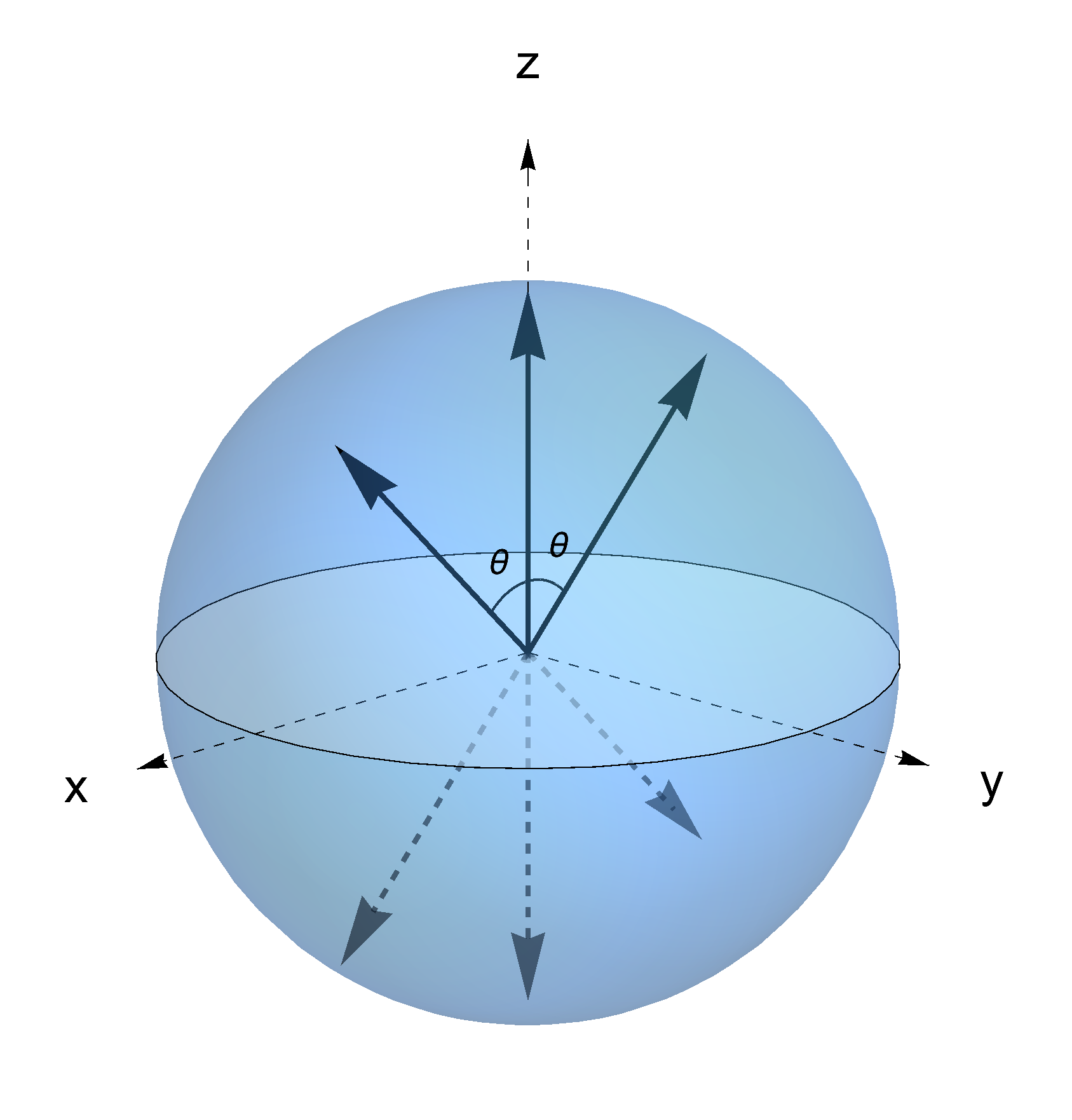}
            \caption{Mirror-symmetric measurements used to show the existence of incompatible measurements that do not exhibit non-classical statistics for any preparation set.}
            \label{fig:mirror-symmetric-measurements}
        \end{figure}
        
        \begin{figure}
            \centering
            \includegraphics[width=\columnwidth]{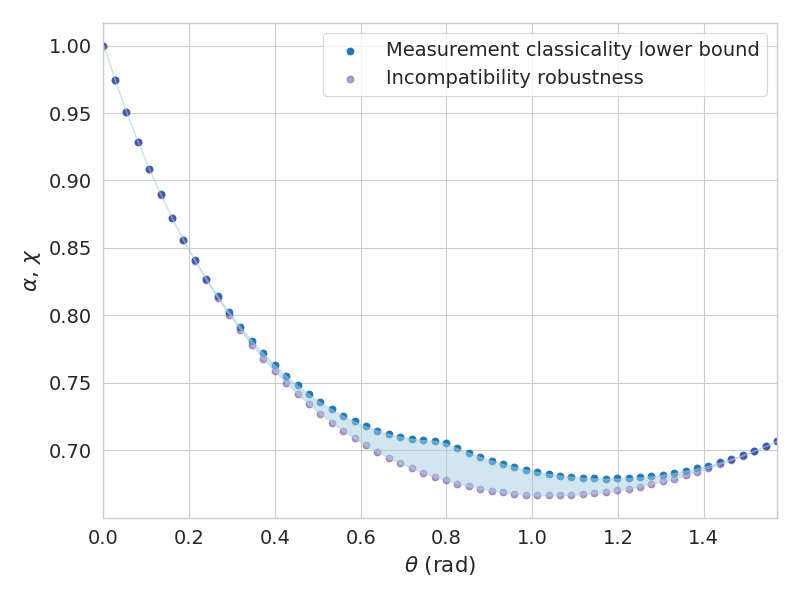}
            \caption{Incompatibility is insufficient for non-classicality in the prepare and measure scenario. For each $\theta$ (see fig. \ref{fig:mirror-symmetric-measurements}), any $\chi$ above the incompatibility robustness curve stands for an incompatible measurement set, and any $\alpha$ below the measurement classicality lower bound represents measurements that certifiedly do not generate non-classical statistics, regardless what preparations they act upon. Thenceforth, the shaded region contains incompatible albeit classical measurements.}
            \label{fig:incompatibility-vs-classicality}
        \end{figure}

%%%%%%%%%%%%%%%%%%%%%%%%%%%%%%%%%%%%%%%%%%%%%%%%%%%%%%%%%%%%%%%%%%%%

\section{Conclusion}
    
    The ability to certify classicality is essential to known applications of the PAM scenario, which range from communication in quantum networks and self-testing of quantum channels to randomness certification and beyond. Although Bell-like inequalities for some modest settings in the prepare and measure scenario were already known, and activation phenomena were found for quantum preparations under two projective measurements, a method to certify classicality for any number of measurements --- and consequently a proof of genuine non-classicality activation for dichotomic measurements, --- remained elusive.
    
    We contributed to this problem devising a sufficient criterion to certify classicality for both projective and generalized measurements by only probing the statistics of finitely many measurements, then showing activation indeed happens for all dichotomic measurements in a large set of quantum preparations --- a result intimately connected to quantum advantages in random access codes. Using an optimization strategy inspired by \cite{fillettaz-algorithmic-2018}, we were able to increase the number of measurements we probe --- an indispensable ingredient to our method --- to otherwise intractable values. This was essential to the applications we have shown, and is straightforwardly adaptable to other scenarios such as Bell nonlocality and EPR steering \cite{cavalcanti-method-2016,hirsch-method-2016}.

   In turn, studying measurements as resources has been an active research topic \cite{Oszmaniec_2019, Buscemi-2020}, and following this trend we adapted our method to certify a given set of measurements is never able to generate non-classical behaviors, irrespective of what preparation set they are applied to. We showed the value of this tool by proving there are incompatible measurements which can only lead to classical correlations, which means measurement incompatibility is not sufficient for non-classicality in the PAM scenario.
    
    Further interesting possibilities are that of activation phenomena under generalized measurements with more than two effects, and activation of measurements non-classicality (similarly to what we have proven for the non-classicality activation of states). We were not able to show they happen for our targeted scenarios, but investigating larger settings is an interesting next step to which our methods are readily applicable. Additionally, figs. \ref{fig:activation} and \ref{fig:incompatibility-vs-classicality} shows our results are both noise and preparation error resistant for a large span of states, and experimental implementation would require no entangled states. Consequently, we believe our results are verifiable in practice.

\begin{acknowledgments}
C. G. and R. R. would like to thank Carlos Vieira and Marcelo Terra Cunha for helpful discussions. We acknowledge the John Templeton Foundation via the Grant Q-CAUSAL No. 61084, the Serrapilheira Institute (Grant No. Serra-1708-15763), the Brazilian National Council for Scientific and Technological Development (CNPq) via the National Institute for Science and Technology on Quantum Information (INCT-IQ), Grants No. 307172/2017-1 and No. 406574/2018-9, the Brazilian agencies MCTIC and MEC, the S\~{a}o Paulo Research Foundation FAPESP (Grant No. 2018/07258-7), and FAEPEX/UNICAMP (Grant No. 3044/19).
\end{acknowledgments}    

\bibliographystyle{apsrev4-2}
\bibliography{bibliography}

\end{document}